\newtheorem*{theorem}{Theorem} 
\newtheorem*{lemma}{Lemma} 
\newtheorem*{conclusion}{Conclusion} 
\newcommand{\be}{\begin{equation}}
\newcommand{\ee}{\end{equation}}
\newcommand{\pa}{\partial}
\newcommand{\bea}{\begin{eqnarray}}
\newcommand{\eea}{\end{eqnarray}}
\newcommand{\ben}{\begin{eqnarray*}}
\newcommand{\een}{\end{eqnarray*}}
\newcommand{\md}{\mathrm{d}} 
\begin{document}

\title{General-relativistic versus Newtonian:
geometric dragging and dynamic anti-dragging in stationary selfgravitating disks in the first post-Newtonian approximation}
 
\author{Piotr Jaranowski}
\affiliation{Wydzia\l~ Fizyki,
Uniwersytet w Bia{\l}ymstoku,
Lipowa 41, 15--424 Bia{\l}ystok, Poland}

\author{Patryk Mach}
\author{Edward Malec}
\author{Micha\l~Pir\'og}
\affiliation{Instytut Fizyki Mariana  Smoluchowskiego, Uniwersytet Jagiello\'nski, {\L}ojasiewicza 11, 30-348 Krak\'{o}w, Poland} 

\begin{abstract}
We evaluate general-relativistic effects in motion of  stationary selfgravitating accretion disks around a Schwarzschild black hole, assuming the  first post-Newtonian (1PN) approximation.
There arises an integrability condition, that leads to the emergence of two types of general-relativistic corrections  to a Newtonian rotation curve.
The  well known geometric dragging of frames accelerates rotation but the hitherto unknown dynamic term, that  reflects the disk structure, deccelerates rotation. The net result can diminish the Newtonian angular velocity  of rotation in a central disk zone but the geometric dragging of frames dominates in the disk boundary zone. Both effects are nonlinear in nature   and they disappear  in the limit of test fluids.  Dust disks can be only geometrically dragged while uniformly rotating gaseous  disk are untouched at the 1PN order. General-relativistic contributions can strongly affect rotation periods in  Keplerian motion  for    compact systems.
\end{abstract}

\maketitle

\section{Introduction}

Rotation curves are important  characteristics of stationary accretion disks. Angular velocities can be determined
from observations of astrophysical systems and they  allow for the direct determination of the central mass when disks are light and their selfgravity can be neglected \cite{Pringle,Novikov-Thorne}. For heavy disks the selfgravity must be included and  even in the Newtonian case one can give only a  rough estimate of the central mass \cite{MMP1}. Additional information on the disk geometry  and modelling would be required to learn more about masses of observed objects. This approach has been  applied to NGC 4258, the unique AGN with a well measured Keplerian rotation curve of the central disk \cite{MMP2}. A well known feature of axially symmetric Newtonian accretion disks is that  rotation curves of fluids  depend only on the distance to the rotation axis and do not depend on the distance to the plane of symmetry. 

The principal aim of this paper is to investigate general-relativistic corrections to Newtonian  rotation curves in systems with stationary accretion disks.  We take into account self-gravity of the accretion disk and   in  numerical analysis  specialize to the Keplerian  rotation law, at the 0PN (Newtonian) level. The 1PN approximation scheme is employed, following Blanchet, Damour, and Sch\"afer \cite{BDS}.

The existing  research on general-relativistic accretion disks focuses mainly  on test systems in a prescribed spacetime geometry. The literature is extensive, to mention a pioneering paper  by Bardeen and Wagoner \cite{Bardeen_Wagoner} and an early review by Novikov and Thorne \cite{Novikov-Thorne}. One of recent reviews is written by Karas, Hure and Semerak \cite{Karas_2004},  Abramowicz and Fragile \cite{Abram}, and  Stergioulas \cite{Sterg}. See also numerous references therein. We should  mention here the work of Fishbone and Moncrief, who studied the influence of the black holes angular momentum onto the disk structure    for  the stationary flow of isentropic fluid in Kerr geometry  \cite{Fishbone-Moncrief}.  

The first  general-relativistic  formulation of thick selfgravitating disks (around a black hole or a star) has been derived by Nishida, Lanza, and Eriguchi \cite{nishida_eriguchi,nishida1}. They found in particular dragging of inertial frames due to rotating toroids, for general-relativistic extensions of two types of Newtonian rotation curves --- uniform angular velocity and constant specific angular momentum.  Ansorg and Petroff \cite{Ansorg}  studied numerically  a disk --- black hole system from a different perspective, focusing on the geometry of the apparent horizon and its parametrization.

The order of this work is following. The relevant equations in 1PN approximation  are given in Sec.\ 2. Section 3 displays final equations, under the simplifying assumption of axial symmetry. In Sec.\ 4 we show  that the consistency of  1PN equations imposes  an integrability condition. That yields  a  dual structure of the  corrections to the Newtonian  rotation curve. One of the  terms  can be recognized as   the well known geometric dragging of frames induced indirectly (via the backreation effect) by the disk rotation. The other depends on the specific  enthalpy and thus it has a dynamic, material character. In Sec.\ 5 we prove that the dynamic term deccelerates rotation, while the geometric effect increases the  angular velocity. A scaling symmetry of Euler equations allows one to find a simple scaling law for the  1PN  angular velocity correction and for its  ratio to the Newtonian angular velocity. The post-Newtonian corrections   are analyzed for fluids and dust, and for different rotation curves.

It is notable that rigid rotation is untouched by 1PN corrections, while dust disks are influenced only by the geometric drag. Section  6 is dedicated to the description of the numerical approach to the problem. The obtained results are discussed in Sec.\ 7.  It appears --- in agreement with analytic results --- that the two effects, geometric and dynamic,  work against each other  and that absolute values of their extrema are   comparable.   As a consequence the net general-relativistic 1PN effect is  weakest (can vanish), paradoxically,  in a central disk zone where the two component parts taken separately   are strongest.  The dynamic component vanishes at the disk boundary; thus the  dragging of frames dominates in the disk boundary zone. The  important feature seen in the 1PN approximation, that may have observational consequences, is that the rotation curve depends on the height above the plane of disk symmetry.  
Finally, we summarize obtained results and point out open questions.

\section{Equations}

Einstein equations, with the signature $(-,+,+,+)$, read
\begin{equation}
R_{\mu \nu} -g_{\mu \nu }{R\over 2} = 8\pi {G\over c^4}T_{\mu \nu },
\label{ee1}
\end{equation}
where $T_{\mu \nu }$ is the stress-momentum tensor.  
The  {\emph{stationary}} metric is given in the form suitable for the 1PN approximation,
in Cartesian coordinates $x=x^1,y=x^2,z=x^3, x^0=ct$, by
\begin{align}
\label{metric}
\md s^2 &= \left( -1-2{ \frac {U \left( x,y,z\right) }{{c}^{2}}}-2{\frac { \left( U\left( x,y,z \right)  \right) ^{2}}{{c}^{4}}}\right) (\md x^0)^2
\nonumber\\[1ex]&\qquad
- 2{\frac {A_i\left(x,y,z\right) }{{c}^{3}}} \md x^i \md x^0
\nonumber\\[1ex]&\qquad
+ \left(1-2{\frac {U \left( x,y,z \right) }{{c}^{2}}}\right)  \left( \md x^2 + \md y^2 + \md z^2\right) .
\end{align}
In the remainder of this Section we use Cartesian coordinates.
 {We employ the stress-momentum tensor of the form}
\be
T^{\alpha\beta} = T^{\alpha\beta}_\textrm{BH} + T^{\alpha\beta}_\textrm{D},
\ee
where $T^{\alpha\beta}_\textrm{BH}$ describes the point particle
(which models the central black hole) at rest located at the origin of the coordinate system
and $T^{\alpha\beta}_\textrm{D}$ is the  {stress}-momentum tensor of the disk matter.
The tensor $T^{\alpha\beta}_\textrm{BH}$  {describing a single point particle}
is proportional to the Dirac delta distribution,
\be
\label{emBH}
T^{\alpha\beta}_\textrm{BH} = \frac{M_{\textrm{c}} c^2}{\sqrt{g}}
\frac{u^\alpha_\textrm{BH}u^\beta_\textrm{BH}}{u^0_\textrm{BH}}
\delta(\mathbf{x}-\mathbf{z}_\mathrm{BH}(t)),
\ee
where $M_{\textrm{c}}$ is the mass parameter of the point particle,
$g:=-\det(g_{\mu\nu})$ and 
$u^\alpha_\textrm{BH}:=\mathrm{d}z^\alpha_\textrm{BH}/(c\,\mathrm{d}\tau_\textrm{BH})$
is the 4-velocity along the particle's world line parametrized by the proper time $\tau_\textrm{BH}$.
 {We} assume that the point particle is located at rest at the origin of the coordinate system,
{therefore} $\mathbf{z}_\mathrm{BH}(t)\equiv\mathbf{0}$; then $T^{\alpha\beta}_\textrm{BH}$ simplifies to
\be
\label{emBHrest}
T^{00}_\textrm{BH} = \frac{M_{\textrm{c}} c^2}{\sqrt{g}} (u^0_\textrm{BH})^2 \delta(\mathbf{x}),
\quad T^{0i}_\textrm{BH} = T^{ij}_\textrm{BH} = 0.
\ee
The disk is made of perfect fluid with a stress-momentum tensor
\be
\label{emD}
T^{\alpha\beta}_\textrm{D} = \rho (c^2+h)u^\alpha u^\beta + p g^{\alpha\beta},
\ee
where $\rho$ is the baryonic rest-mass density, $h$ is the  specific enthalpy,
and $p$ is the  pressure.  
The 4-velocity  {$u^\mu:=\mathrm{d}x^\mu/(c\,\mathrm{d}\tau)$} along the world line of fluid particles
is normalized ($\tau$ is their proper time), $g_{\alpha\beta}u^\alpha u^\beta=-1$ .
 
The proper specific enthalpy $h$ is related with the proper relativistic energy density $e$
through the relation
\be
h = \frac{e+p}{\rho} - c^2.
\ee
We assume the  polytropic equation of state  
\be
e(\rho,S) = \rho c^2 + \frac{K(S)}{\gamma-1}\rho^\gamma,
\ee
where $S$ is the specific entropy of fluid.
Then the following relations hold
\begin{align}
p(\rho ,S) &= \rho \left(\frac{\pa e}{\pa \rho }\right)_{\!S} - e = K(S) \rho^\gamma,
\\[1ex]
h(\rho ,S) &= K(S) \frac{\gamma}{\gamma-1}\rho^{\gamma-1}.
\end{align}
In this paper we assume that the entropy is constant.

The 1PN-accurate  stationary relativistic Euler equation can be  derived  directly from the conservation law,
$\nabla_\alpha T^{\alpha\beta} = 0$, and the continuity of the baryonic current, $\nabla_\alpha\left(\rho u^\alpha\right)=0$.
Alternatively one can employ Eqs.\ (2.18) in \cite{BDS}. The result reads
\begin{widetext}
\begin{multline}
\label{euler1}
\pa_j \left( \rho\,v^iv^j + c^{-2} \rho\,v^j \left(-A_i+v^i(h-6U+\mathbf{v}^2)\right)
+ (1-2c^{-2}U)p\,\delta^j_i \right)
\\[1ex]
= -\left( \rho + M_{\textrm{c}} (1 + c^{-2}U) \delta(\mathbf{x})
+ c^{-2} \left(2p + \rho(h - 2U + 2\mathbf{v}^2)\right) \right)\pa_iU
- c^{-2} \rho\,v^j \pa_i A_j,
\end{multline}
where $v^i:=\mathrm{d}x^i/\mathrm{d}t$ is the coordinate velocity of the fluid particle
and $\mathbf{v}^2:=\delta_{ij}v^iv^j$.
The scalar potential $U$ is the solution of the following 1PN-accurate equation:
\be
\Delta U = 4\pi G \left( \rho + M_{\textrm{c}} (1+c^{-2}U)\delta(\mathbf{x}) + c^{-2}\left(2p + \rho (h-2U+2\mathbf{v}^2)\right) \right),
\ee
\end{widetext}
and the vector potential $A_i$ fulfills the equation
\be
\Delta A_i = -16\pi G \rho v_i,
\ee
where $\Delta $ is the flat laplacian.
Asymptotically we have $|{\bf A}|\propto 4J/R $, where $J$ is the total angular momentum of the configuration
and $R:=\sqrt{x^2+y^2+z^2}$ is the coordinate cylindrical radius. In the case of stationary configurations the 1PN-accurate continuity equation
for the prefect fluid with the stress-momentum tensor \eqref{emD} reads
\be
\label{contEq}
0 = \pa_i(\sqrt{g}u^0 \rho v^i) = \pa_i\Big( \rho v^i + c^{-2} \rho v^i \Big(\frac{1}{2}\mathbf{v}^2-3U\Big) \Big)
+ \mathcal{O}(c^{-4}).
\ee

\section{Axially symmetric disks}

Let us now assume axial and  equatorial symmetry.
We shall replace  the  Cartesian coordinates $(x,y,z)$
by cylindrical ones $(r,\phi,z)$, where $x=r\cos\phi$, $y=r\sin\phi$.
Axial symmetry means that the only non-zero cylindrical component of the 3-vector field ${\bf v }$
and the 3-covector field ${\bf A}$ is the azimuthal component $v^\phi$ and $A_\phi$, respectively:
${\bf A}=A_\phi \md \phi $ and $ {\bf v}=v^\phi \partial_\phi$. These two components and the scalar quantities $\rho$, $p$, $h$, $U$ all do not depend on $\phi$.

 {We split different quantities ($\rho$, $p$, $h$, $U$, and $v^i$) into their Newtonian (denoted by subscript `0')
and 1PN (denoted by subscript `1') parts.
E.g., for the baryonic rest-mass density $\rho$ and the fluid velocity $v^i$ this splitting reads}
\begin{subequations}
\label{density_rotation}
\begin{align}
\rho &= \rho_0 + c^{-2} \rho_1,
\\[1ex]
 \qquad
v^\phi  &= v_0^\phi  + c^{-2} v_1^\phi .
\end{align}
\end{subequations}
Notice that, up to 1PN order,   
\begin{equation}
\label{enthalpy}
\frac{1}{\rho} \partial_i p = \partial_i h_0 + c^{-2} \partial_ih_1
 {+ \mathcal{O}(c^{-4})},
\end{equation}
where the {1PN} correction $h_1$ to the specific enthalpy  
 {can be written as}
\begin{equation}
\label{correction_enthalpy}
h_1= \left( \gamma -1 \right) h_0{\rho_1\over \rho_0}.
\end{equation}
 {One can easily derive from Eq.\ \eqref{enthalpy} useful relations connecting gradients of pressure and specific enthalpy
at the Newtonian and 1PN levels,
\be
\pa_i p_0 = \rho_0 \pa_i h_0, \quad
\pa_i p_1 = \rho_0 \pa_i h_1 + \rho_1 \pa_i h_0.
\ee}

Making use of the introduced above splitting of quantities into Newtonian and 1PN parts
one can extract from Eq.\ \eqref{euler1} the Newtonian- and 1PN-level Euler equations.
The Newtonian equations read
\be
\label{NEuler}
\nabla_j(\rho_0 v_0^i v_0^j) + \pa_i p_0 = -\rho_0 \pa_i U_0 + \pa_i U_0\, M_{\textrm{c}} \delta(\mathbf{x}).
\ee
The 1PN Euler equations take the form
\begin{widetext}
\begin{multline}
\label{1PNEuler}
\nabla_j \left( \rho_0(v_0^i v_1^j + v_1^i v_0^j) + \rho_1 v_0^i v_0^j
+ \rho_0\,v_0^j \left(-A_i+v_0^i(h_0-6U_0+ r^2(v_0^\phi )^2)\right)
+ (p_1 - 2 p_0 U_0)\delta^j_i \right)
\\[1ex]
= -(\pa_iU_1+U_0\pa_iU_0)\,M_{\textrm{c}}\delta(\mathbf{x})
- \left( \rho_1 + 2p_0 + \rho_0 \left(h_0 - 2U_0 + 2 r^2(v_0^\phi )^2\right) \right)\pa_iU_0 - \rho_0 \pa_i U_1
- \rho_0\,v_0^j \pa_i A_j.
\end{multline}
\end{widetext}
The splitting of the potential $U$ into its Newtonian $U_0$ and 1PN $U_1$ parts reads
\be
U = U_0 + c^{-2} U_1.
\ee
The determination of $U_0$ needs only material fluid quantities of zeroth order,
while $U_1$ requires also the 1PN density correction $\rho_1$:
\begin{subequations}
\label{U0U1_bis}
\begin{align}
\label{DeltaU0}
\Delta U_0 &= 4\pi G \left( M_{\textrm{c}} \delta(\mathbf{x}) + \rho_0 \right),
\\
\label{DeltaU1}
\Delta U_1 &= 4\pi G \Big( M_{\textrm{c}} U_0 \delta(\mathbf{x}) 
+ \rho_1 + 2p_0 
\nonumber\\&\qquad
+ \rho_0(h_0-2U_0+2 r^2(v_0^\phi )^2) \Big).
\end{align}
\end{subequations}

The   disk mass at the Newtonian level  is equal to $M_{\textrm{D}}= \int_V \textrm{d}^3x  \rho_0$ and the total mass of the system is $M_{\textrm{c}}+M_{\textrm{D}}$. The 1PN  mass correction $M_\textrm{1PN}$ can be read off from the asymptotic expansion of the correction potential $U_1$. It is given by 
\begin{multline}
M_\text{1PN}= \int_V \md^3x 4\pi G \Big( M_{\textrm{c}} U_0 \delta(\mathbf{x}) + \rho_1 + 2p_0
\\
+ \rho_0(h_0-2U_0+2 r^2(v_0^\phi )^2) \Big).
\end{multline}

The right-hand sides of Eqs.\ \eqref{NEuler}, \eqref{1PNEuler}, and \eqref{U0U1_bis}
contain terms proportional to Dirac delta distribution of the form $f(\mathbf{x})\delta(\mathbf{x})$,
where the function $f$ can be singular at $\mathbf{x}=\mathbf{0}$. We replace these terms by $\mathrm{Pf}_\mathbf{0}(f)\delta(\mathbf{x})$,
where $\mathrm{Pf}_\mathbf{0}$ is the ``Hadamard partie finie'' of the function evaluated at its singular point $\mathbf{x}=\mathbf{0}$.
The operation $\mathrm{Pf}_{\mathbf{x}_0}(f)$ for the function $f$ which is singular at the point $\mathbf{x}_0$ is defined as follows.
Let $\mathbf{n}$ be a unit vector, then one defines $f_\mathbf{n}(\varepsilon):=f(\mathbf{x}_0+\varepsilon\mathbf{n})$.
One expands $f_\mathbf{n}$ into a Laurent series around $\varepsilon=0$:
$$
f_\mathbf{n}(\varepsilon) = \sum_{m=-N}^\infty a_m(\mathbf{n})\varepsilon^n.
$$
The finite part of the function $f$ is defined as the coefficient of $\varepsilon^0$ averaged over all directions:
$$
\mathrm{Pf}_{\mathbf{x}_0}(f) := \frac{1}{4\pi} \oint\md\Omega\,a_0(\mathbf{n}).
$$
This way of regularizing singular functions was commonly used
in numerous derivations of post-Newtonian equations of motion for point-particle systems
(up to the fourth post-Newtonian order \cite{JS2013})
and it is best justified by dimensional regularization \cite{DJS2001,BDEF2004}
(the limit $d\to3$ of the $d$-dimensional version of the $\mathrm{Pf}_{\mathbf{x}_0}$ operation
would give in our computations results identical with those obtained by means of the defined above 3-dimensional version of this operation).

The solution of Eq.\ \eqref{DeltaU0} can be written symbolically in the form
\be
\label{U0}
U_0(\mathbf{x}) = -\frac{GM_{\textrm{c}}}{|\mathbf{x}|} + U_0^\mathrm{D}(\mathbf{x}),
\quad U_0^\mathrm{D}(\mathbf{x}) := 4\pi G (\Delta^{-1}\rho_0)(\mathbf{x}).
\ee
Because $\mathrm{Pf}_\mathbf{0}(1/|\mathbf{x}|)=0$, the term $U_0 \delta(\mathbf{x})$
on the right-hand side of Eq.\ \eqref{DeltaU1} is replaced by $U_0^\mathrm{D}(\mathbf{0})\delta(\mathbf{x})$,
so the regularized form of this equation reads
\begin{multline}
\Delta U_1 = 4\pi G \Big( M_\textrm{c} U_0^\mathrm{D}(\mathbf{0})\delta(\mathbf{x})
 + \rho_1 + 2p_0
\\
+ \rho_0(h_0-2U_0+2 r^2(v_0^\phi )^2) \Big).
\end{multline}
The right-hand side of Eq.\ \eqref{NEuler} contains $\pa_i U_0\delta(\mathbf{x})$,
which is replaced by $\mathrm{Pf}_\mathbf{0}(\pa_i U_0)\delta(\mathbf{x})$.
According to \eqref{U0} one computes
\begin{align*}
\mathrm{Pf}_\mathbf{0}(\pa_i U_0) &= \mathrm{Pf}_\mathbf{0}(GM_\textrm{c} x^i/|\mathbf{x}|^3+\pa_iU_0^\mathrm{D}(\mathbf{x}))
\\[1ex]
&= 0 + \pa_iU_0^\mathrm{D}(\mathbf{0}).
\end{align*}
But   $\pa_iU_0^\mathrm{D}(\mathbf{0})=0$, because of the assumed axial and equatorial symmetry, and the whole term vanishes.
Similarly one can show that all terms with Dirac deltas in Eqs.\ \eqref{1PNEuler} vanish.

The Newtonian Euler equations \eqref{NEuler} in cylindrical coordinates take the form
\begin{subequations}
\label{eulerN}
\begin{align}
\pa_z h_0 &= -\pa_z U_0,
\\[1ex]
\pa_r h_0 - r(v_0^\phi)^2 &=  -\pa_r U_0.
\end{align}
\end{subequations}
The 1PN Euler equations \eqref{1PNEuler} written in cylindrical coordinates read
\begin{widetext}
\begin{subequations}
\label{euler1PN}
\begin{align}
\pa_z h_1 &= -\pa_z U_1 - v^\phi_0 \pa_z A_\phi
- (h_0 - 2 U_0 - 2r^2(v_0^\phi )^2\pa_z U_0
+ 2 U_0 \pa_z h_0,
\\[1ex]\nonumber
\partial_r h_1 &- 2 r v^\phi_0 v^\phi_1
- r (v_0^\phi)^2 (h_0 - 6U_0 +r^2(v_0^\phi )^2
\\
&= -\pa_r U_1 - v^\phi_0 \pa_r A_\phi
- (h_0-2U_0-2r^2(v_0^\phi )^2) \pa_r U_0
+ 2 U_0 \pa_r h_0.
\end{align}
\end{subequations}
One can use Newtonian equations \eqref{eulerN} in order to simplify the 1PN equations \eqref{euler1PN}.
The result is
\begin{subequations}
\label{euler1PN-2}
\begin{align}
-\pa_zh_1 -\pa_z U_1 -v^\phi_0 \pa_z A_\phi - (h_0 + 2 r^2(v_0^\phi )^2) \pa_z U_0 &= 0,
\\[1ex]
-\partial_r h_1 -\pa_r U_1+2r v^\phi_0v^\phi_1 - v^\phi_0 \pa_r A_\phi + r^3 (v_0^\phi)^4
+  r (v^\phi_0)^2 h_0
- 4r (v^\phi_0)^2U_0
- (h_0 + 2 r^2(v_0^\phi )^2)\pa_r U_0  &= 0.
\end{align}
\end{subequations}
It is easy to check, making use of $\partial_z v^\phi_0=0$  {(see the next section)},
that these two equations can be written as 
\begin{subequations}
\label{euler1gradPN}
\begin{align}
\pa_z\Psi &= 0,
\\[1ex]
\partial_r \Psi + 2r v^\phi_0v^\phi_1 + A_\phi \pa_r v^\phi_0
-2r^2\pa_r (v^\phi_0)^2  h_0 &= 0.
\end{align}
\end{subequations}
Here the function $\Psi $ is defined as 
\begin{equation}
\label{Psi}
\Psi = -h_1 - U_1 -v^\phi_0 A_\phi + 2 r^2 (v^\phi_0)^2 h_0 - {3\over 2} h^2_0 - 4 h_0 U_0 - 2 U_0^2
- \int \mathrm{d}r\, r^3(v^\phi_0)^4.
\end{equation}
The only nonzero vectorial component $A_\phi $ satisfies the following equation
\be
\label{Afi}
\Delta A_\phi -2\frac{\pa_rA_\phi }{r}= -16 \pi G r^2 \rho_0 v^\phi_0.
\ee
\end{widetext}

\section{The integrability condition}

Differentiation of  the  Eqs.\ \eqref{eulerN} --- the first one  with respect to $r$ and the second
one with respect to  $z$ --- and subtraction of the obtained equations, lead to $\partial_z v^\phi_0=0$.
This is the consistency relation for the validity of the zeroth order (Newtonian) approximation.
That tells us that the Newtonian part of the rotation curve $v^\phi_0$ is an arbitrary function of the cylindrical radius $r$. It is well known that the requirement of stability imposes additional restrictions, through a growth condition imposed onto specific angular momentum   in Newtonian and relativistic hydrodynamics \cite{Tassoul,Seguin,Friedman_Stergioulas}.

The consistency condition for the 1PN approximation can be obtained from Eq.\ \eqref{euler1gradPN}.
Differentiating the first and second equation with respect $r $ and $z$, respectively, and subtracting the obtained equations, one arrives at 
\begin{equation}
\label{euler1aaPN}
2r v^\phi_0\pa_zv^\phi_1 +  (\pa_r v^\phi_0)(\pa_z A_\phi) 
 -2r^2\pa_r (v^\phi_0)^2  \pa_z h_0 =0.
\end{equation}
This constraint is resolved by 
\begin{equation}
\label{constraint_solution}
v^\phi_1 = -{ A_\phi \over 2r v^\phi_0} \pa_r v^\phi_0
+ \frac{rh_0}{ v^\phi_0} \pa_r(v^\phi_0)^2,
\end{equation}
as can be checked by direct inspection.
We can summarize these results as follows.
\begin{theorem}
The 1PN equations (\ref{euler1gradPN}) reduce to the algebraic equation
$$
\Psi(r,z) = \text{const},
$$
provided that the consistency condition 
\begin{equation}
\label{constraint_solution_th}
v^\phi_1 = -{ A_\phi \over 2r v^\phi_0} \pa_r v^\phi_0
+2 rh_0   \pa_r v^\phi_0
\end{equation}
is satisfied.
\end{theorem}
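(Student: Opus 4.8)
The plan is to exploit the fact that the first of the two relations collected in \eqref{euler1gradPN} is already literally $\pa_z\Psi=0$, so the only thing to establish is that, once the consistency condition \eqref{constraint_solution_th} is imposed, the second (radial) relation in \eqref{euler1gradPN} degenerates to $\pa_r\Psi=0$. Two vanishing partial derivatives, together with the fact that $\Psi$ depends on $(r,z)$ only (axial symmetry), then force $\Psi$ to be constant on the connected meridional section of the disk. As a preliminary I would record that \eqref{constraint_solution_th} is just \eqref{constraint_solution} rewritten with the identity $\pa_r(v^\phi_0)^2=2\,v^\phi_0\,\pa_r v^\phi_0$, so that $\frac{rh_0}{v^\phi_0}\,\pa_r(v^\phi_0)^2 = 2rh_0\,\pa_r v^\phi_0$; written this way the factor $v^\phi_0$ has dropped from the second term, which is what makes the subsequent substitution clean.

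The core of the argument is a one-line substitution. Plugging \eqref{constraint_solution_th} into the radial equation of \eqref{euler1gradPN}, the product $2r v^\phi_0 v^\phi_1$ splits into $-A_\phi\,\pa_r v^\phi_0$ (from the frame-dragging term of \eqref{constraint_solution_th}) plus $4r^2 v^\phi_0 h_0\,\pa_r v^\phi_0$ (from the enthalpy term). The first of these exactly cancels the explicit $A_\phi\,\pa_r v^\phi_0$ already present in \eqref{euler1gradPN}; the second, upon rewriting $2r^2\,\pa_r(v^\phi_0)^2\,h_0 = 4r^2 v^\phi_0 h_0\,\pa_r v^\phi_0$ with the same identity, cancels the remaining $h_0$-term. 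What is left is precisely $\pa_r\Psi=0$, and the same manipulation read backwards shows the equivalence implicit in the word ``reduce.'' For completeness I would also note, as advertised by ``direct inspection,'' that \eqref{constraint_solution_th} indeed solves the integrability relation \eqref{euler1aaPN} obtained by cross-differentiating \eqref{euler1gradPN}: differentiating \eqref{constraint_solution_th} with respect to $z$ and using $\pa_z v^\phi_0=0$ (so that $\pa_r v^\phi_0$ and $(v^\phi_0)^2$ are $z$-independent) reproduces \eqref{euler1aaPN} term by term, which is exactly the condition that a single-valued $\Psi$ can exist.

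The manipulation itself is short; the one point that deserves care is conceptual. Equation \eqref{euler1aaPN} constrains only $\pa_z v^\phi_1$, so its general solution is \eqref{constraint_solution_th} plus an arbitrary function $f(r)$; for such a general solution the radial relation of \eqref{euler1gradPN} would reduce merely to $\pa_r\Psi = -2r v^\phi_0 f(r)$, i.e.\ $\Psi=\Psi(r)$, \emph{not} a constant. The actual content of the theorem is therefore the selection of the distinguished solution \eqref{constraint_solution_th} (the $f\equiv 0$ branch), for which the $A_\phi$- and $h_0$-contributions drop out \emph{simultaneously} and \eqref{euler1gradPN} collapses to a pure gradient; this is also the branch for which $v^\phi_1$ vanishes in the test-fluid limit, where $h_0\to 0$ and $A_\phi\to 0$ for the Schwarzschild background. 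Checking that this double cancellation is complete — and noting the tacit assumptions $v^\phi_0\neq 0$ (needed for \eqref{constraint_solution_th} to be meaningful) and connectedness of the fluid region (needed to pass from vanishing gradient to constant) — is the part of the proof I would verify most carefully.
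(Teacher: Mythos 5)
Your proof is correct and follows essentially the same route as the paper: substituting the consistency condition into the radial equation of \eqref{euler1gradPN} and using $\pa_r(v^\phi_0)^2=2v^\phi_0\,\pa_r v^\phi_0$ cancels the $A_\phi$- and $h_0$-terms simultaneously, leaving $\pa_r\Psi=0$, which with $\pa_z\Psi=0$ gives $\Psi=\mathrm{const}$ --- precisely the ``direct inspection'' the paper invokes. Your additional remark that \eqref{euler1aaPN} determines $v^\phi_1$ only up to an additive function of $r$, so that the theorem selects the particular branch on which $\Psi$ is constant rather than merely $z$-independent, is accurate and a useful point the paper leaves implicit.
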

The interpretation of \eqref{constraint_solution_th}  is straightforward. The first term corresponds to the conventional frame dragging, experienced by isolated test bodies in stationary spacetimes.  The frame dragging  is in this case a backreaction type effect --- rotating disks generate (through Einstein equations) the metric function $A_i$, which in turn influences  the rotation, via the first part of the formula (\ref{constraint_solution_th}).    The other term is purely hydrodynamic and it depends both  on the rotation curve $v^\phi_0$ and on the specific enthalpy in  the Newtonian approximation.   It represents the   direct 1PN reaction   of rotating gas, proportional to the specific enthalpy,  onto its own  rotation.

Notice, that  $A_\phi $ and $v^\phi_1$ change sign when $v^\phi_0\rightarrow - v^\phi_0$,    therefore we consider only the case with  $v^\phi_0 \ge 0$.

\section{Scaling symmetry, Newtonian rotation curves and 1PN approximation}

In what follows we shall explain how  the two terms in the expression of $v^\phi_1$  given in (\ref{constraint_solution_th}) influence rotation.  
It is easy to show that the dynamic part  deccelerates rotation. The specific enthalpy  $h\ge 0$ is nonnegative. Let us suppose a nonincreasing   function $v_0^\phi (r)$, thence the term  {$v^\phi_{1\text{dyn}}:=2rh_0\pa_rv^\phi_0$ is nonpositive --- the   instantaneous 1PN dynamic  reaction slows  the motion: ``anti-draggs'' a system.

We prove that, in contrast to the above, the geometric dragging always increases the speed of rotation. The crucial part of the argument is to show, in the forthcoming lemma,  that the function $A_\phi $ is nonnegative. Taking this for granted and again assuming that $ \pa_r v^\phi_0\le 0$ we infer that $ -{ A_\phi \over 2r v^\phi_0} \pa_r v^\phi_0 \ge 0$.
Thus a moving torus  induces (via backreaction) a geometry, that pushes the torus in the same direction; a rotating  torus bootstrapps itself. 
\begin{lemma}
Assume that $ \rho_0 \ge 0$, $v^\phi_0\ge 0$,
both $ \rho_0$ and $v^\phi_0$ are at least of H\"older class $C^{1,\mu}$,  
and  that $A_\phi$, the solution of \eqref{Afi}, vanishes at infinity like $1/R$.
Then $A_\phi $ is nonnegative. 
\end{lemma}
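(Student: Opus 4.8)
The plan is to recast Eq.~\eqref{Afi} as an ordinary Poisson equation with a sign-definite source by passing to an auxiliary flat space of higher dimension, where positivity becomes a textbook statement about Newtonian potentials (equivalently, a maximum-principle statement). The starting point is the remark that axial and equatorial symmetry force the Cartesian components of the vector potential to have the form $A_x=-y\,f(r,z)$, $A_y=x\,f(r,z)$, so that $A_\phi=r^{2}f$; the factor $r^{2}$ is nothing but the $\md\phi$-component of the covector $\mathbf A$. Substituting $A_\phi=r^{2}f$ into \eqref{Afi} and using
\[
\Delta A_\phi-\frac{2}{r}\pa_r A_\phi=r^{2}\Bigl(\Delta f+\frac{2}{r}\pa_r f\Bigr),
\]
one finds that $f$ obeys $\Delta f+\tfrac{2}{r}\pa_r f=-16\pi G\,\rho_0 v^\phi_0$.

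Next I would observe that the operator $\Delta+\tfrac{2}{r}\pa_r$, acting on functions of $(r,z)$ only, is precisely the flat Laplacian of $\mathbf R^{5}=\mathbf R^{4}\times\mathbf R$ restricted to functions invariant under rotations of the $\mathbf R^{4}$ factor, with $r$ playing the role of its radius: the $\mathbf R^{4}$ radial part contributes $\tfrac{3}{r}\pa_r=\tfrac{1}{r}\pa_r+\tfrac{2}{r}\pa_r$ and the extra factor supplies $\pa_z^{2}$. Extending $f$ in the obvious rotationally invariant way, it solves $\Delta_{(5)}f=-16\pi G\,\rho_0 v^\phi_0$ on $\mathbf R^{5}$, the symmetry axis $r=0$ having become a codimension-four subset; because $f$ is bounded there---this is exactly where $C^{1,\mu}$ regularity of $\rho_0$ and $v^\phi_0$, via Schauder estimates for \eqref{Afi}, is used---that subset is removable and $f$ is a genuine solution on all of $\mathbf R^{5}$.

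The hypotheses $\rho_0\ge0$, $v^\phi_0\ge0$ make the right-hand side nonpositive, so $f$ is superharmonic on $\mathbf R^{5}$, while the assumed fall-off $A_\phi\sim1/R$ translates into $f\to0$ at spatial infinity. The minimum principle then finishes the job: a function superharmonic on all of $\mathbf R^{5}$ and tending to zero at infinity cannot be negative anywhere, since a negative value would produce a negative interior minimum, forcing $f$ to be constant and hence identically zero. Therefore $f\ge0$ and $A_\phi=r^{2}f\ge0$. When the disk matter is compactly supported one may bypass the maximum principle and write $f$ directly as the five-dimensional Newtonian potential
\[
f(X)=16\pi G\!\int_{\mathbf R^{5}}\!\frac{c_{5}}{|X-X'|^{3}}\,\rho_0(X')\,v^\phi_0(X')\,\md^{5}X',\qquad c_{5}>0,
\]
which is manifestly nonnegative.

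The one step that demands genuine care---and which I expect to be the main obstacle---is the passage across the symmetry axis: one must argue that $A_\phi/r^{2}$ really does extend to a function regular enough to be treated as a classical (or at least bounded) superharmonic function in the five-dimensional picture. This rests on the symmetry-dictated structure $A_\phi=r^{2}\times(\text{function of }r^{2}\text{ and }z)$ together with interior Schauder estimates for \eqref{Afi} (whence $A_\phi\in C^{3,\mu}$), after which everything is standard. A self-contained alternative that avoids the change of dimension is to establish positivity of the Green function of $\Delta-\tfrac{2}{r}\pa_r$ with the prescribed $1/R$ fall-off---it has a closed form in terms of complete elliptic integrals---and then deduce $A_\phi\ge0$ by convolving this positive kernel against the nonnegative source $16\pi G r^{2}\rho_0 v^\phi_0$.
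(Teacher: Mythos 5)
Your argument is correct, and it takes a genuinely different route from the paper. The paper proves the lemma by contradiction: it integrates $A_\phi\,\bigl(\Delta A_\phi - \tfrac{2}{r}\pa_r A_\phi\bigr)$ by parts over the putative region where $A_\phi\le 0$ and plays the sign of the resulting Dirichlet integral against the sign of the source term; it explicitly remarks that a direct minimum-principle argument is blocked by the singular $1/r$ coefficient on the axis, which is precisely the obstruction your substitution removes. Writing $A_\phi=r^2 f$ converts \eqref{Afi} into the genuine five-dimensional Poisson equation $\Delta_{(5)}f=-16\pi G\,\rho_0 v^\phi_0$ with a sign-definite right-hand side, after which either the minimum principle or the manifestly positive $\mathbf{R}^5$ Green function finishes the proof. (Note that the natural companion trick in three dimensions --- passing to the Cartesian component $A_y=\cos\phi\,A_\phi/r$, which the paper itself uses in its numerical section --- would \emph{not} work here, because the 3D Poisson representation then carries a non-sign-definite $\cos\phi'$ weight; the 5D lift is essential to get a constant-sign kernel.) Your approach buys several things the paper's does not: it avoids having to justify that the level set $\{A_\phi\le 0\}$ has a boundary regular enough for the divergence theorem and that the extra boundary term generated by the first-order $-\tfrac{2}{r}\pa_r$ piece really vanishes (points the paper treats tersely); it yields strict positivity of $A_\phi$ off the axis wherever $\rho_0 v^\phi_0\not\equiv 0$; and it gives the correct asymptotics $A_\phi=\mathcal{O}(r^2/R^3)$ for free. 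The two points that need care in your version are exactly the ones you flag: the removability of the codimension-four axis for the bounded function $f=A_\phi/r^2$ (handled by the symmetry-dictated structure of the Cartesian components plus Schauder regularity), and the decay of $f$ at infinity --- the hypothesis ``$A_\phi\sim 1/R$'' should really be read as $|\mathbf{A}|=A_\phi/r=\mathcal{O}(1/R)$, and the actual solution decays faster, so $f\to 0$ holds and the minimum principle applies.
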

\begin{proof}
The potential $A_\phi $ vanishes at spatial infinity  and Eq. (\ref{Afi}) with the conditions stated in the Lemma would satisfy assumptions 
of the minimum  principle as stated in \cite{Evans, Trudinger}, save the term proportional to $1/r$, which is singular along the $z$-axis.  Ignoring the latter, one would claim that from the minimum principle $A_\phi \ge 0$. 

Due to the above difficulty, we shall adopt another approach to show that the solution is nonnegative everywhere within the disk.  We shall apply  the method of contradiction. Let the   solution of Eq. (\ref{Afi}) exists on all of ${\bf R}^3$.  Let there exists a  region $\Omega $   that intersects the disk with  $A_\phi \le 0$ and that vanishes at an outer 2-surface $S_\infty $ ($S_\infty $ can be located at spatial infinity.)   The potential $A_\phi $ is at least of class $C^{3,\mu}$, from the embedding theorems \cite{Evans}, and vanishes like $1/R$ at infinity.  The complementary  region (possibly empty)  will be called  $\Omega'$ and $A_\phi >0$ on $\Omega'$. The region  $\Omega $ borders  $\Omega'$ along a boundary $\partial \Omega $   with $A_\phi  =0$.  The surface integrals $\int_{\partial \Omega } \md S^i A_\phi \nabla_iA_\phi $  and $\int_{S_\infty}\md S^i A_\phi \nabla_iA_\phi $  vanish, due to the boundary conditions.  

Multiply Eq.\ (\ref{Afi}) by $A_\phi $ over $\Omega $ and integrate by parts.
This yields 
\be
-\int_{\Omega }  \md V \left( \nabla A_\phi \right)^2  = -16\pi G \int_{\Omega } \md V r^2\rho_0 v^\phi_0A_\phi .
\label{Afi1}
\ee
The two boundary terms, that arise during integration by parts, vanish irrespective of whether $\Omega $ is bounded or  unbounded.
Since $A_\phi $ is differentiable and does not vanish identically, the left hand  side of  (\ref{Afi1})   must be strictly negative. But   if $A_\phi <0$ in $\Omega $, then the right hand side is weakly positive. Thus we get a contradiction; the solution $A_\phi $ cannot be negative within the disk volume.    \end{proof}
Notice that the angular velocity of the fluid in the coordinate frame,
$v^\phi={u^\phi}/{u^0}$,
is equal to  the angular velocity of the  fluid as seen by an observer at rest at infinity. The inverse of $v^\phi $ is proportional to the disk rotation period. Thus the term $v^\phi_1$ is responsible for  the 1PN correction to this period. Since  the dynamic part  $v^\phi_{1\text{dyn}}$ is deccelerating rotation, it increases the rotation period. The  drag term with $   A_\phi $  in turn is positive, which means that it tends to shorten the  rotation period.

One can  find out that Newtonian and 1PN  equations  (\ref{U0U1_bis}),  (\ref{eulerN}),   (\ref{euler1PN})  and  (\ref{Afi}) are invariant under  following scalings:
\begin{widetext}
\begin{align}
\label{scaling_symmetry}
{\bf x'}=\lambda {\bf x}, \quad
M'_{\textrm{c}}=M_{\textrm{c}}, \quad
\rho_0' &= \frac{\rho_0}{\lambda^3}, \quad
h_0'=\frac{h_0}{\lambda }, \quad
v'^\phi_0=\frac{v^\phi_0}{\lambda^{3/2}}, \quad
U_0'=\frac{U_0 }{\lambda},
\nonumber\\
A_\phi'=\frac{A_\phi }{\sqrt{\lambda}}, \quad
\rho_1' &= \frac{\rho_1}{\lambda^4}, \quad
h_1'=\frac{h_1}{\lambda^2 }, \quad
v'^\phi_1=\frac{v^\phi_1}{\lambda^{5/2} }, \quad
U_1'=\frac{U_1 }{\lambda^2} .
\end{align}
\end{widetext}
These scalings imply that  under this scaling operation:
\begin{enumerate}
\item the Newtonian disk mass as well as the total Newtonian  mass  are invariant
\item the Newtonian pressure scales as   $p_0'={p_0}/{\lambda^4}$
\item the 1PN mass correction scales according to $  M_\text{1PN}'={  M_\text{1PN}}/{\lambda}$
\end{enumerate}
The scaling of specific enthalpy entails the change of the equation of state.
The coefficient $K$ in the polytropic equation of state $p=K\rho^{\gamma }$ has to scale according to $K'=K\lambda^{3\gamma -4}$.
If we assume the Keplerian rotation law (see Sec.\ VII), then the parameter $\omega_0'=\omega_0$.
The physical sense of this scaling is that the change of distances while keeping masses,
associated with the appropriate adaptation of the equation of state and the rotation law, yields a new solution of the equations.

In our context, since  the Newtonian Euler equations represent the zeroth order approximation to the general-relativistic theory,
we have to demand that the inner boundary of the disk \  is located outside of the Schwarzschild horizon,
 i.e., $(1-U/c^2)r_\text{in}>{2GM_{\textrm{c}}}/{c^2}$ (or even  $(1-U/c^2)r_\text{in}>{6GM_{\textrm{c}}}/{c^2}$),
while the speed of sound and the linear velocity $rv^\phi_0$ do not exceed the speed of light.
These conditions limit our freedom in specifying the scaling parameter $\lambda $.

Equation (\ref{constraint_solution_th}) serves as the consistency condition;
one easily finds that its right-hand side scales like $1/\lambda^{5/2}$,
in agreement with the scaling of $v^\phi_1$ in (\ref{scaling_symmetry}).  
Collecting all these facts together, one immediately proves following
\begin{conclusion}
Let a solution of (\ref{eulerN}), 1PN equations (\ref{euler1PN}) and Eq.\ (\ref{Afi})
be obtained through rescalings  specified in (\ref{scaling_symmetry}).
Then  its first post-Newtonian correction to the angular velocity 
scales as follows:
$$
\frac{v'^{\phi}_1}{c^2v'^{\phi}_0} = \frac{1}{\lambda}\frac{v^\phi_1}{c^2v^\phi_0}.
$$
\end{conclusion}

Dust disks and uniformly rotating gas toroids are distinguished in a sense that becomes evident from the following description.   

i) {\it Uniform rotation.} The correction term $v_1^\phi$ is proportional to the derivative of the angular velocity.
Since  for  uniformly  rotating systems $v^\phi_0= \mathrm{const}$, $v^\phi_1$ strictly  vanishes.
We conclude that in particular the rotation periods of rigidly rotating disks do not change in the 1PN perturbation order.
That adds to the exceptional status of uniformly rotating  disks, which are already known to minimize the total mass-energy
for a given baryon number and total angular momentum \cite{Boyer_Lindquist,Sterg}.
 
ii) {\it Dust.} The pressure and specific enthalpy $h_0$ do vanish in the case of dust,
which means that dust disks are  exposed only to the frame dragging. The behaviour of dust and gas disks are clearly different.

The rotation curve $v^\phi _0$ in the Newtonian approximation depends only on  the distance from the rotation axis. That means that even thick and selfgravitating Newtonian disks rotate uniformly along circles $z=\mathrm{const}$ on  cylinders of constant $r$. In contrast to that, the first post-Newtonian  correction $v_1^\phi $ to the rotation curve is strictly determined by other quantities and depends both on $r $ and $z$. The effective angular velocity is generically not constant along circles of constant height on  cylinders with fixed values of the coordinate  radius  $r$ and  of   the circumferential  radius  $(1-U/c^2)r$.

\section{On  numerical method}

In the following we will work in cylindrical $(r, \phi, z)$ or spherical $(R, \theta, \phi)$ coordinates. For convenience, we will denote $\mu = \cos \theta$. We will also abuse the notation by reserving the same symbol for the given quantity in both coordinate systems.

We employ an iterative, Self-Consistent Field (SCF) type method based on solving two elliptic equations in axial symmetry: the standard scalar Poisson equation
\begin{equation}
\label{zya}
\Delta \Phi = f(r, z),
\end{equation}
and the vector equation of the form
\begin{equation}
\label{zyb}
\Delta \mathbf{A} = f(r, z) \mathbf{v},
\end{equation}
where in cylindrical (or spherical) coordinates the only nonvanishing components of  fields $\mathbf{A}$ and $\mathbf{v}$ are $A_\phi \, \md \phi$ and $v_\phi \, \md \phi$, respectively. It is also assumed that $\partial_\phi A_\phi = \partial_\phi v_\phi = 0$. It is an easy exercise to check that $A_\phi$ satisfies
\begin{multline}
\label{aaa}
\Delta \left( \frac{\cos \phi \, A_\phi}{r} \right) = \frac{f(r, z) \cos \phi \, v_\phi}{r}
\\
\mathrm{or} \quad \Delta \left( \frac{\sin \phi \, A_\phi}{r} \right) = \frac{f(r, z) \sin \phi \, v_\phi}{r}.
\end{multline}
This leads to the equation of the form
\[ \Delta A_\phi - \frac{2 \partial_r A_\phi}{r} = f(r, z) v_\phi.  \]

These equations are solved by expanding the appropriate Green functions in Legendre functions. Although many drawbacks of such an approach are known, we prefer to follow it because of its conceptual simplicity. The following equations are known in the literature (in this or a similar form); we prefer to discuss them here for completeness.

The Green function of the flat 3-dimensional laplacian corresponding to a solution that vanishes asymptotically has the standard expansion
\begin{widetext}
\begin{equation}
\label{aae}
- \frac{1}{4 \pi |\mathbf x - \mathbf x^\prime|} = - \frac{1}{4 \pi} \sum_{j=0}^\infty \frac{R_<^j}{R_>^{j+1}}   \left\{ P_j(\mu) P_j(\mu^\prime) + 2 \sum_{m = 1}^j \frac{(j - m)!}{(j + m)!} P_j^m(\mu) P_j^m(\mu^\prime) \cos [ m (\phi - \phi^\prime)] \right\},
\end{equation}
where $R_{>(<)}$ denotes the larger (smaller) of the two radii $R$ and $R^\prime$. Using Eq.~(\ref{aae}) one can write the solution of Eq.~(\ref{zya}) as
\end{widetext}
\begin{equation}
\label{zzzr}
 \Phi(R,\mu) = - \frac{1}{2} \sum_{j=0}^\infty P_j(\mu) \left[ \frac{1}{R^{j+1}} E_j(R) + R^j F_j(R) \right],
\end{equation}
where
\begin{equation}
\label{zzzs}
E_j(R) = \int_0^R \md R^\prime {R^\prime}^{j+2} \int_{-1}^1 \md \mu^\prime P_j(\mu^\prime) f(R^\prime, \mu^\prime)
\end{equation}
and
\begin{equation}
\label{zzzt}
F_j(R) = \int_R^\infty \md R^\prime \frac{1}{{R^\prime}^{j-1}} \int_{-1}^1 \md \mu^\prime P_j(\mu^\prime) f(R^\prime, \mu^\prime).
\end{equation}
Note that if $f(R,\mu)$ is equatorially symmetric (i.e., it is an even function of $\mu$), integrals with $P_{2j + 1}(\mu)$ vanish. In this case it is also enough to integrate with respect to $\mu$ over $0 \le \mu \le 1$. The numerical implementation of the above formulas is straightforward; it is discussed for instance in~\cite{steinmetz}.

Equation (\ref{zyb}) can be solved in a similar fashion, which is equivalent to finding of a suitable expansion of the Green function for the operator $\Delta - (2/r)\partial_r$. One can start with Eq.~(\ref{aaa}). Using Eq.~(\ref{aae}), the solution for $\cos \phi \, A_\phi/(R \sqrt{1 - \mu^2})$ can be written as 
\begin{widetext}
\begin{eqnarray*}
\frac{\cos \phi \, A_\phi (R, \mu)}{R \sqrt{1 - \mu^2}} & = & - \frac{1}{4 \pi} \int_0^\infty \md R^\prime R^\prime \int_{-1}^1 \frac{\md \mu^\prime}{\sqrt{1 - {\mu^\prime}^2}} \int_0^{2 \pi} \md \phi^\prime \cos \, \phi^\prime f(R^\prime, \mu^\prime) v_\phi (R^\prime, \mu^\prime) \\
& & \times \sum_{j=0}^\infty \frac{R_<^j}{R_>^{j+1}}   \left\{ P_j(\mu) P_j(\mu^\prime) + 2 \sum_{m = 1}^j \frac{(j - m)!}{(j + m)!} P_j^m(\mu) P_j^m(\mu^\prime) \cos [ m (\phi - \phi^\prime)] \right\}.
\end{eqnarray*}
The integral
\[ I = \int_0^{2 \pi} \md \phi^\prime \cos \phi^\prime \cos [ m (\phi - \phi^\prime) ] \]
can be easily evaluated. For $m = 1$ one has $I = \pi \cos \phi$. For $m \neq 1$ one has
\[ I = \frac{2 m \sin (m \pi) \cos [m (\pi - \phi)] }{(m - 1)(m + 1)}. \]
This yields
\begin{eqnarray*}
\frac{\cos \phi \, A_\phi (R, \mu)}{R \sqrt{1 - \mu^2}} & = & - \frac{1}{2} \cos \phi \int_0^\infty \md R^\prime R^\prime \int_{-1}^1 \frac{\md \mu^\prime}{\sqrt{1 - {\mu^\prime}^2}} f(R^\prime, \mu^\prime) v_\phi(R^\prime, \mu^\prime) \\
& & \times \sum_{j=1}^\infty \frac{R_<^j}{R_>^{j+1}} \frac{1}{j(j+1)} P_j^1(\mu) P_j^1 (\mu^\prime).
\end{eqnarray*}
And finally
\begin{equation}
\label{aab}
A_\phi (R, \mu) = -\frac{1}{2} \sqrt{1 - \mu^2} \sum_{j=1}^\infty \frac{1}{j(j+1)} P_j^1(\mu) \left[ \frac{1}{R^j} C_j(R) + R^{j+1} D_j(R) \right],
\end{equation}
where
\begin{equation}
\label{aac}
C_j(R) = \int_0^R \md R^\prime {R^\prime}^{j+1} \int_{-1}^1 \frac{\md \mu^\prime}{\sqrt{1 - {\mu^\prime}^2}} P_j^1(\mu^\prime) f(R^\prime, \mu^\prime) v_\phi (R^\prime, \mu^\prime)
\end{equation}
and
\begin{equation}
\label{aad}
D_j(R) = \int_R^\infty \md R^\prime \frac{1}{{R^\prime}^j} \int_{-1}^1 \frac{\md \mu^\prime}{\sqrt{1 - {\mu^\prime}^2}} P_j^1(\mu^\prime) f(R^\prime, \mu^\prime) v_\phi (R^\prime, \mu^\prime).
\end{equation}
\end{widetext}
Of course $d\mu/\sqrt{1 - \mu^2} = - \md \theta$. Note that
\[ P_{2j}^1 (\mu) = \sqrt{1 - \mu^2} \frac{\md}{\md \mu} P_{2j}(\mu) \]
(where we use the convention without Condon--Shortley's phase).  The derivatives $\md P_{2j}(\mu)/\md \mu$ are odd functions of $\mu$ and  $f (R,\mu) v_\phi (R,\mu)$ are symmetric under reflection $z\rightarrow -z$. Therefore all integrals with $P^1_{2j}$ vanish.

Note also that Eq.~(\ref{aaa}) is equivalent to the equation
\[ \Delta f + \frac{2}{R}\partial_R f - \frac{2}{R^2} \mu \partial_\mu f  = S(R,\mu), \]
that appears in \cite{bonazzola, komatsu, nishida1,  {nishida_eriguchi}}. In cylindrical coordinates the above equation can be rewritten as
\[ \partial_r^2 f + \partial_z^2 f + \frac{3}{r} \partial_r f = S(r, z). \]
Setting $\psi = r \cos \phi \, f$, it is easy to show that $\psi$ satisfies
\[ \Delta_{(r, z, \phi)} \psi = r \cos \phi \left( \partial_r^2 f + \partial_z^2 f + \frac{3}{r} \partial_r f \right) = r \cos \phi \, S(r, z).  \]

In the description of the numerical method given below we specialize to the Keplerian rotation law $v^\phi_0 = \omega_0/r^{3/2}$ and polytropic equations of state $p_0 = K \rho_0^{5/3}$.

The method of obtaining the solutions of the 0-th order approximation (Newtonian solutions) was described in detail in \cite{ MMP1}. Equations to be solved are
\begin{equation}
\label{zzzw}
\Delta U^{\textrm{D}}_0 = 4 \pi G \rho_0
\end{equation}
and
\begin{equation}
\label{zzze}
h_0 + \Phi_{\textrm{c}} - \frac{G M_\textrm{c}}{R} + U^{\textrm{D}}_0  = C,
\end{equation}
where the centrifugal potential is given by
\begin{equation}
\label{zzza}
 \Phi_{\textrm{c}} = - \int^r \md r^\prime r^\prime (v^\phi_0(r^\prime))^2 = \frac{\omega_0^2}{r},
\end{equation}
and the specific enthalpy reads $h_0 = 5K \rho_0^{2/3}/2$. These equations are solved iteratively: in each iteration step one obtains a solution for $U^{\textrm{D}}_0$, basing on the previous density distribution. A new distribution of the enthalpy $h_0$ (or, equivalently, $\rho_0$) is then computed from Eq.~(\ref{zzza}). Also, in each iteration step we renormalize the constants $C$ and $K$, so that the resulting disk has the prescribed values of the inner and outer radii and the maximum density. We use a spherical numerical grid. The solution of Eq.~(\ref{zzzw}) is computed by truncating expansion (\ref{zzzr}) at a sufficiently large number of Legendre polynomials. Integrals (\ref{zzzs}) and (\ref{zzzt}) are computed using standard quadrature formulas.

The 1PN corrections are obtained by solving equations
\begin{widetext}
\begin{equation}
\label{zzzb}
\Delta A_\phi - \frac{2 \partial_r A_\phi}{r} = -16 \pi G \omega_0 \sqrt{r} \rho_0,
\end{equation}
\begin{equation}
\label{zzzc}
 \Delta U_1 = 4 \pi G \left( M_{\textrm{c}} U^{\textrm{D}}_0(\mathbf{0}) \delta(\mathbf x) + \rho_1 + 2 K \rho_0^{5/3} + \rho_0 \left( h_0 - 2 U_0 + 2 \omega_0^2 r^{-1} \right) \right),
\end{equation}
\begin{equation}
\label{zzzd}
h_1 = - U_1 - A_\phi \omega_0 r^{-3/2} + 2 h_0 \omega_0^2 r^{-1} + \frac{1}{2} \omega_0^4 r^{-2} - \frac{3}{2} h_0 - 4 h_0 U_0 - 2 U_0^2 - C_1
\end{equation}
\end{widetext}
for $A_\phi$, $U_1$, $h_1$ and $\rho_1$. Here the Newtonian gravitational potential is given by $U_0 = -G M_{\textrm{c}}/R + U^{\textrm{D}}_0$. The 1PN corrections to the density $\rho_1$ and the enthalpy $h_1$ are related by $h_1 = 5K \gamma \rho_0^{2/3} \rho_1/3$.

Once the 0-th order approximation is known, the potential $A_\phi$ can be obtained from Eq.~(\ref{zzzb}) using the expansion given by Eq.~(\ref{aab}). In the next step we iterate Eqs.~(\ref{zzzc}) and (\ref{zzzd}) in a similar way to that used to obtain the 0-th order solution.

The term $4 \pi G M_{\textrm{c}} U_0 (\mathbf{0}) \delta(\mathbf x)$ on the right-hand side of Eq.~(\ref{zzzd}) yields the term $- G M_{\textrm{c}} U^{\textrm{D}}_0 (\mathbf{0}) / R$ in the solution for $U_1$. Note that, although it is convenient to exclude the origin $\mathbf x = \mathbf{0}$ from the numerical grid, the value $U^{\textrm{D}}_0(\mathbf{0})$ can be still easily computed as
\[ U^{\textrm{D}}_0 (\mathbf{0}) = - G \int \md^3 x \frac{\rho_0}{R}. \] 

In each of the iterations of the Newtonian scheme the value of the enthalpy is obtained from Eq.~(\ref{zzze}). We set $\rho_0 = h_0 = 0$ whenever this equation yields a negative value of $h_0$. This is the key element of our implementation of the free-boundary SCF-type scheme, that allows us to compute the true boundary of the disk. Similarly in the postnewtonian scheme we obtain the value of $h_1$ form Eq.~(\ref{zzzd}). We set $h_1 = \rho_1 = 0$, whenever $h_0 + h_1/c^2 \le 0$ or $\rho_0 + \rho_1/c^2 \le 0$.

The constant $C_1$ appearing in Eq.~(\ref{zzzd}) is renormalized so that in each iteration the correction $h_1$ to the enthalpy   vanishes at the outer end of the  Newtonian disk, in the plane $z=0$. In this way a full post-Newtonian solution is obtained for a specified value of the outer radius.

\section{Discussion of numerical results}

We report  in this Section  numerical results on modelling black hole-disk systems. Disks's masses  are taken in the range  1--2$\times M_{\textrm{c}}$,  where $M_{\textrm{c}}$ is the mass of the central black hole. The Keplerian rotation law   and the polytropic equation of state are assumed, as in the preceding Section.

We carefully choose parameters in all forthcoming examples so that  the 1PN approximation can be valid.
The largest linear velocities $(1-U)rv^\phi $ at the inner part of the disks are of the order of one tenth of the speed of light.
The graphs of the normalized Newtonian and 1PN potentials, ${U_0}/{c^2}$ and ${U_1}/{c^4}$ respectively, are displayed in Figs.\ 1 and 2.
It is clear from Fig.\ 1 that $1\gg{|U_0|}/{c^2}$.
The comparison of Figs.\ 1 and 2 implies that the 1PN correction ${|U_1|}/{c^4}$
constitutes about one hundredth of the main Newtonian contribution.  

\begin{figure}
\begin{center}
\includegraphics[width=0.5\textwidth]{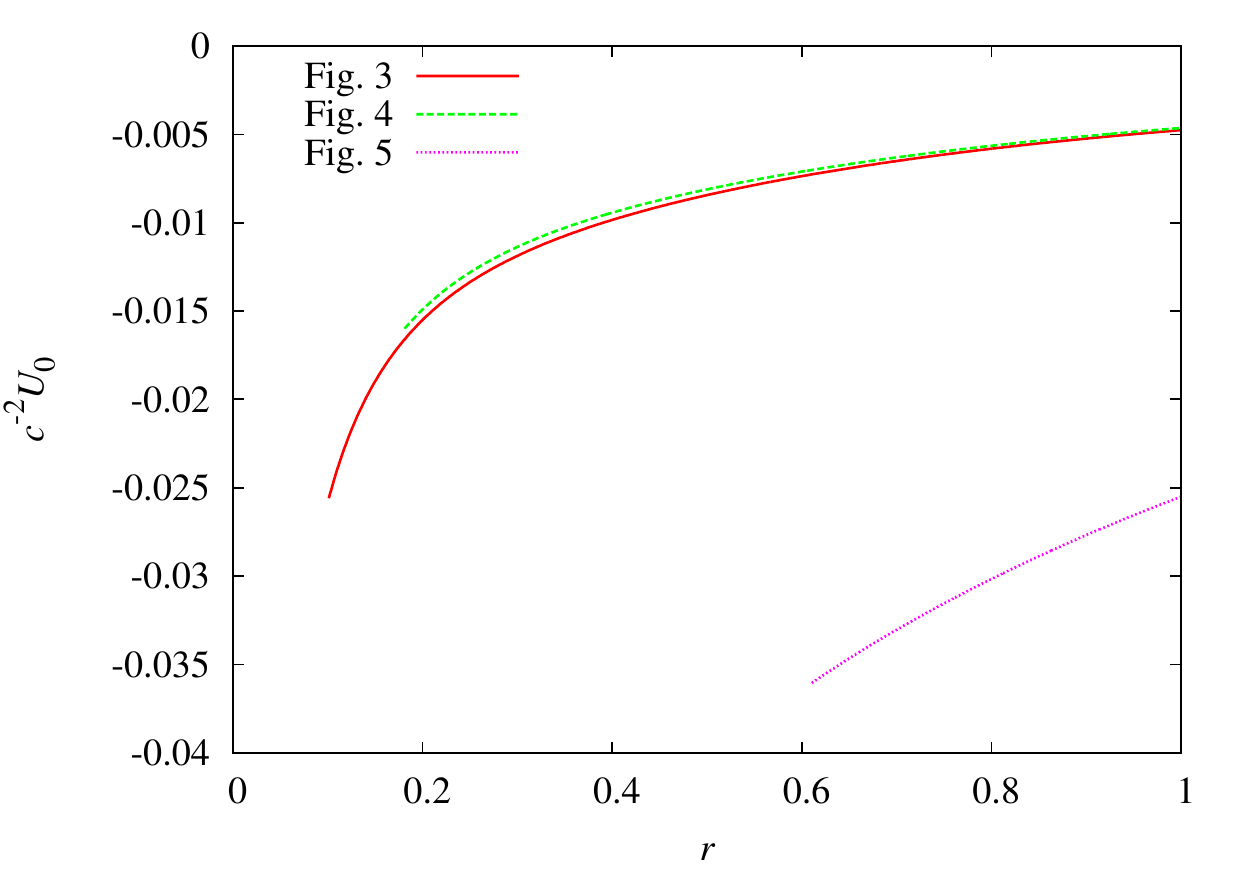}
\end{center}
\caption{\label{fig:1}
The ordinate shows the value of ${U_0}/{c^2}$ along the ray $\phi = \mathrm{const}$ on the central plane $z=0$
and the abscissa the normalized coordinate distance $r/r_\mathrm{out}$ from the center.
The red line corresponds to the solution described in Fig.\ 3,
while the green and violet lines refer to solutions described in Figs.\ 4 and 5, respectively.}
\end{figure}

\begin{figure}
\begin{center}
\includegraphics[width=0.5\textwidth]{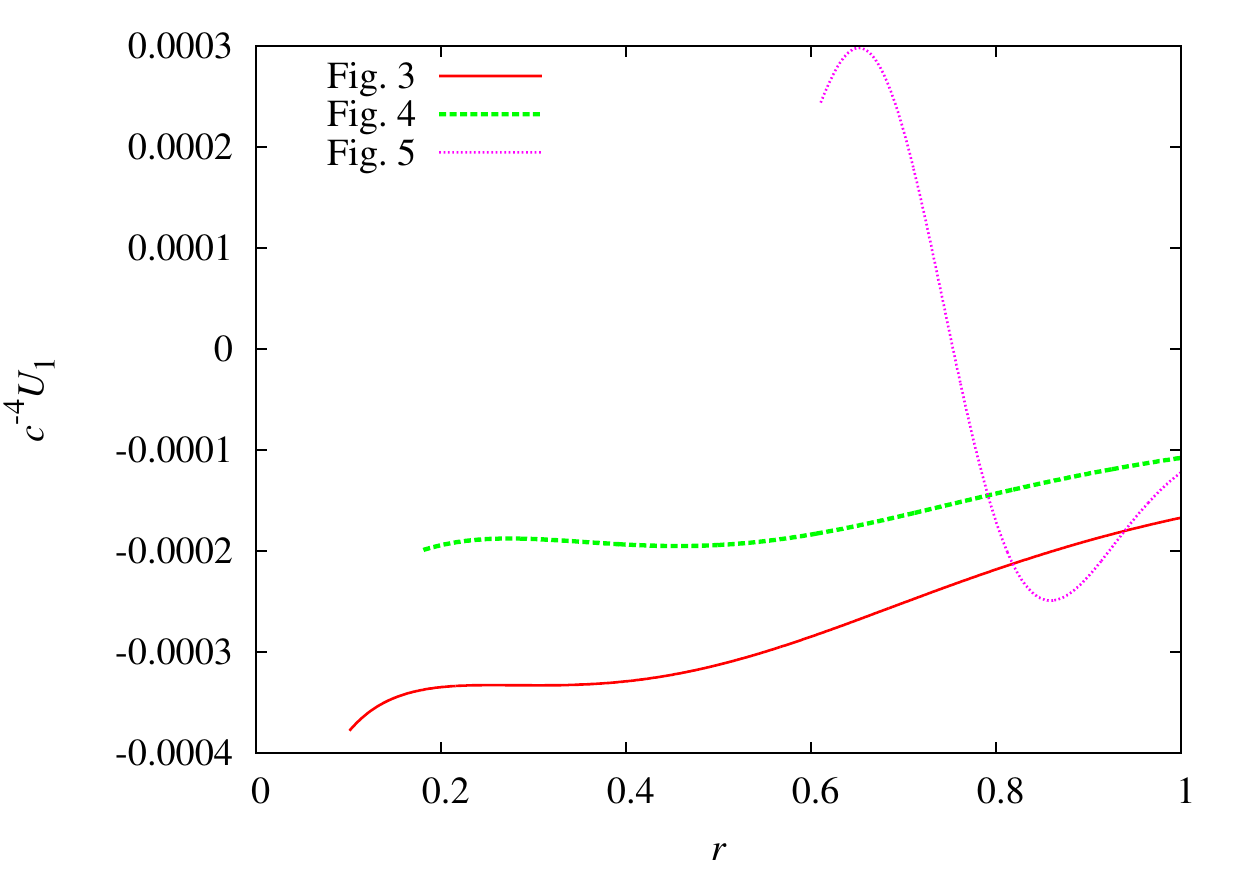}
\end{center}
\caption{\label{fig:2}
The ordinate shows the value of  ${U_1}/{c^2}$ along the ray $\phi = \mathrm{const}$ on the central plane $z=0$
and the abscissa the normalized coordinate distance $r/r_\mathrm{out}$ from the center.
The red line corresponds to the solution described in Fig.\ 3,
while the green and violet lines refer to solutions described in Figs.\ 4 and 5, respectively.}
\end{figure}

The color shaded palettes in Figs.\ 3--5 describe the ratio $v_1^\phi/(c^2v^\phi_0)$ within the disk volume.
In all examples $R_{\textrm{S}}:=2GM_{\textrm{c}}/c^2$ is the Schwarzschild radius of the central black hole.

Keplerian disks  are influenced by both the geometric dragging and the anti-dragging dynamic effects.
 In Fig.\ 3 one sees an inner zone shifted to the centre
with the prevailing braking component (the Newtonian  angular velocity  $v_0^\phi$
is diminished by  the 1PN correction up to 0.3\%),
and the outer part where the drag component dominates
(the  Newtonian} angular velocity  $v_0^\phi$ is enhanced by up to 0.4\%).
The innermost part of the disk is at the coordinate distance $r_\text{in}=25 R_{\textrm{S}}$ 
from the central black hole, while the outermost disk boundary is at $r_\text{out}=250 R_{\textrm{S}}$.
Notice that the circumferential  radius $(1-{U}/c^2)r+\mathcal{O}(c^{-4})$ is well approximated by the coordinate $r$ due to the smallness of potentials.

Our numerical investigation suggests that one can find an infinite number of similar configurations simply
by moving out the inner disk positions up to  $18\%$ of the coordinate size.
Somewhere between the rescaled inner boundary position 0.18 and 0.19     the character of the picture changes 
--- the anti-dragging nowhere dominates and all parts of disks are dragged forward.
This limiting-type configuration is shown in Fig.\ 4.
It remains to be explained in what circumstances the dynamic effect can overcome the geometric dragging.
Our empirical observation is that if the relative width  $w:=(r_\text{out}-r_\text{in})/r_\text{out}>0.2$,
then the latter effect is stronger, and the smaller $w$ the smaller dynamic braking.

We found in a number of examples that the maximal value  of the ratio of the 1PN corrections
to the  Newtonian angular velocities ${v^\phi_1}/({c^2v^\phi_0})$ can achieve a few percents.  
Figure 5 describes a strongly relativistic  compact system,
with the disk width smaller than  $19.5R_{\textrm{S}}$ and relatively large 1PN effects.
The dynamic anti-dragging effect manifests here only by a slight diminishing of the ratio ${v^\phi_1}/({c^2v^\phi_0})$
within the inner part of the central zone.

We already pointed out in Sec.\ V, that one can rescale a given solution within the 1PN approximation
according to the recipe defined in \eqref{scaling_symmetry}.
Physical distances do change under rescalings, but the central mass and the disk mass are invariant.
At the same time the relative velocity correction scales as ${v'^{\phi}_1}/{v'^{\phi}_0 }\propto{1}/{\lambda }$;
thus in principle one can generate from a given solution a sequence of configurations with identical masses,
different geometrical distances and with a different ratio $v^{\phi}_1/(c^2v_0^\phi)$.
One can do that with any of the three already depictured disk configurations.
That suggests, in particular,  that one can generate disk systems with a very large 1PN correction to the angular velocity.
There is, however, a question whether these rescaled solutions can be regarded as being tangent
to the solution of the exact general-relativistic hydrodynamics.
The answer to that cannot be found in the 1PN analysis, but must be sought with the exact general-relativistic treatment.
We leave this problem for future investigation.

\begin{figure}
\begin{center}
\includegraphics[width=0.5\textwidth]{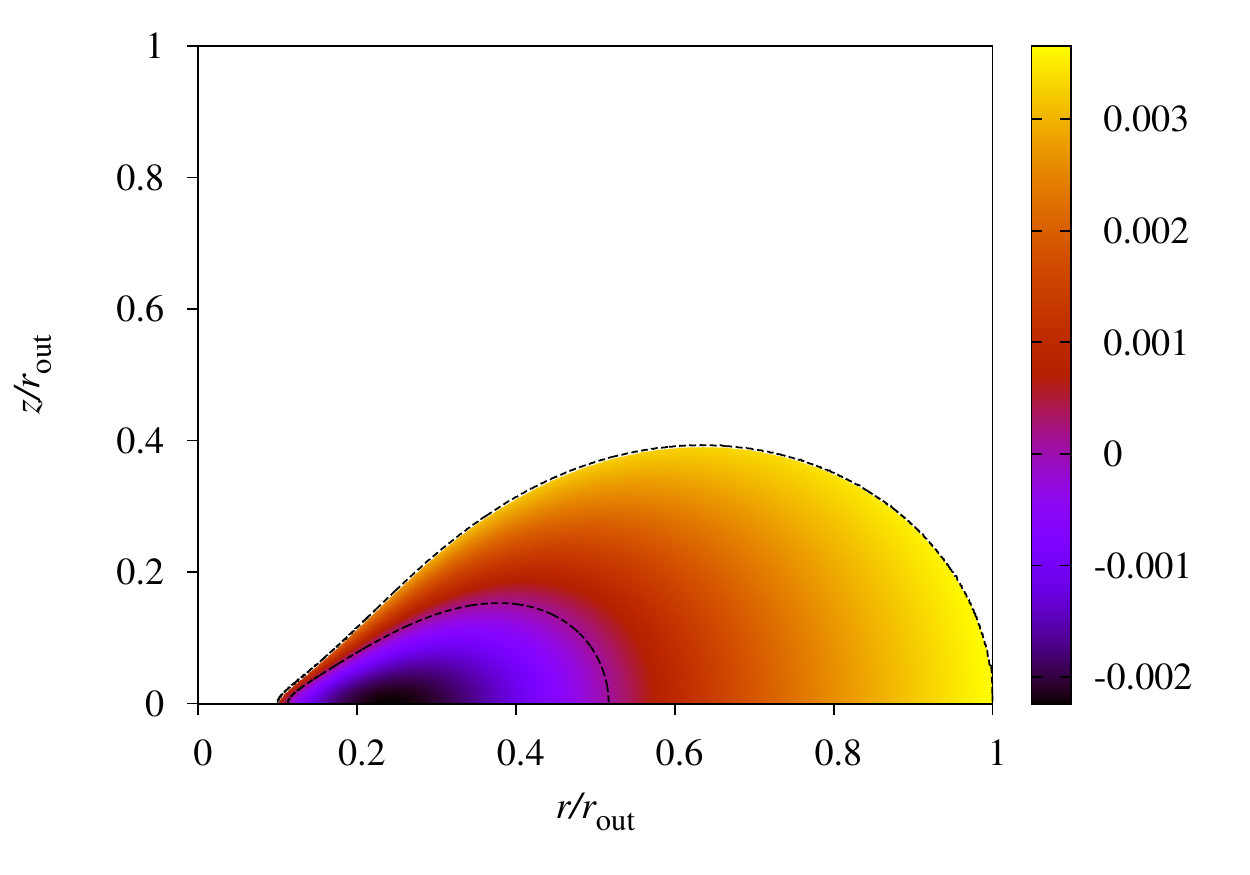}
\end{center}
\caption{\label{fig:3}
 The ratio $v_1^\phi/(c^2v^\phi_0)$ within the disk volume.
The ordinate shows the height of disk and the abscissa the coordinate distance from the center,
in the rescaled unit system $r/r_\mathrm{out}$, where $r_\text{out}\equiv 250 R_{\textrm{S}}$.
The inner disk boundary is located at  $r_\text{in}=25R_{\textrm{S}}$,
the outer boundary $r_\text{out}=250R_{\textrm{S}}$.
The Newtonian disk mass $M_{ \textrm{D} }=1.47\times M_{\textrm{c}}$.
The black region is dominated by the dynamic braking;
its boundary is denoted by the short broken line, where the 1PN velocity correction $v^\phi_1$ vanishes.
The drag is strongest at the disk boundary.
Here $\omega_0=0.85\sqrt{2GM_{\textrm{c}}}$.}
\end{figure}

\begin{figure}
\begin{center}
\includegraphics[width=0.5\textwidth]{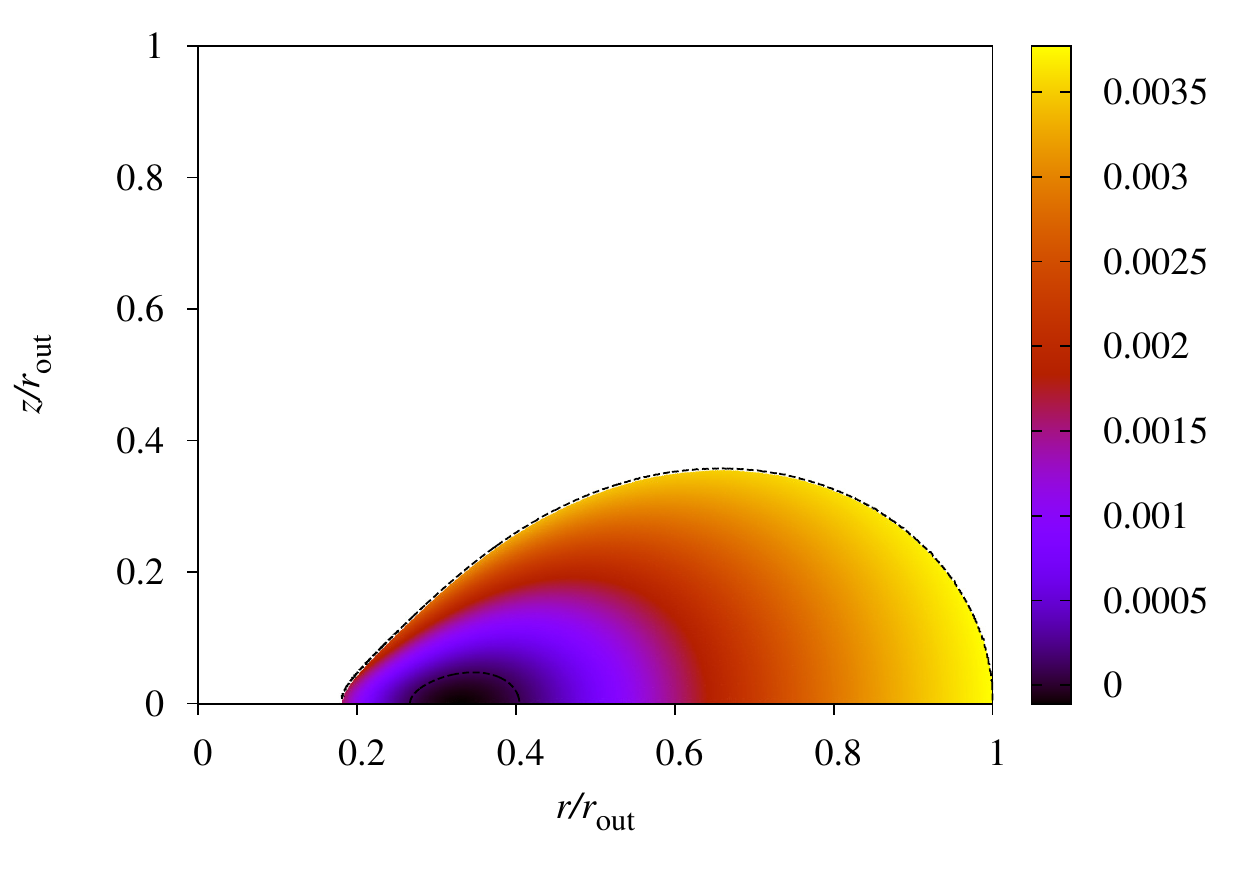}
\end{center}
\caption{\label{fig:4}
 The ratio $v_1^\phi/(c^2v^\phi_0)$ within the disk volume for the limiting-type system.
The ordinate shows the height of disk and the abscissa the coordinate distance from the center,
in  the rescaled unit system $r/r_\mathrm{out}$, where  $r_\mathrm{out}\equiv 450 R_{\textrm{S}}$.
The inner disk boundary is located at  $r_\text{in}=81R_{\textrm{S}}$,
the outer boundary $r_\text{out}=450 R_{\textrm{S}}$.
The Newtonian disk mass $M_{ \text{D} }=1.4 \times M_{\textrm{c}}$.
The black region is dominated by the dynamic braking; its boundary is denoted by the short broken line,
where the 1PN velocity correction $v^\phi_1$ vanishes. The drag is strongest at the disk boundary.
Here $\omega_0=1.188\sqrt{2GM_{\textrm{c}}}$.}
\end{figure}

\begin{figure}
\begin{center}
\includegraphics[width=0.5\textwidth]{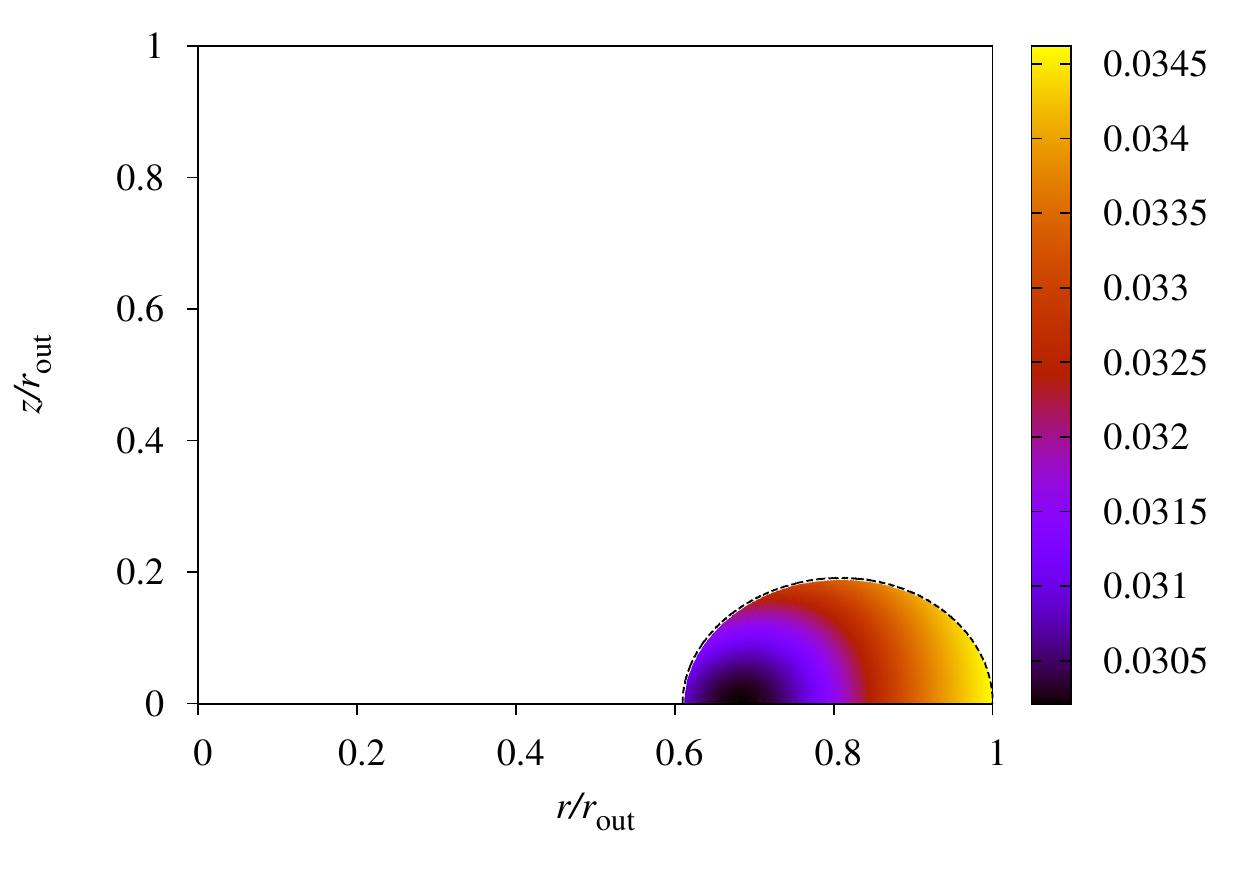}
\end{center}
\caption{\label{fig:5}
 The ratio $v_1^\phi/(c^2v^\phi_0)$ within the disk volume.
The ordinate shows the height of disk and the abscissa the coordinate distance from the center,
in the rescaled unit system $r/r_\mathrm{out}$, where  $r_\mathrm{out} = 50 R_{\textrm{S}}$.
The inner boundary is located at  $r_\text{in}=30.5R_{\textrm{S}}$, the outer boundary $r_\text{out}=50 R_{\textrm{S}}$.
The disk mass $M_{ \text{D} }=1.8 \times M_{\textrm{c}}$.
The drag is strongest at the disk boundary, up to 3.46\%,
and weakest (circa 3.0\%) in the black region --- due to anti-dragging.
Here $\omega_0=1.31\sqrt{2GM_{\textrm{c}}}$.} 
\end{figure}

\section{Summary}

We investigated stationary gaseous disks around a spinless black hole in the 1PN approximation scheme. The 1PN calculation is obviously simpler technically than the full general-relativistic picture, but it does include essential features --- the nonlinearity and backreaction --- that are typical for Einstein equations.  The concepts of the quasilocal masses and angular momentum can be meaningfully defined in stationary axially symmetric systems, but it is useful that the 1PN approximation bases on Newtonian concepts and Newtonian intuition. 
The 1PN approximation allows for flexibility in imposing rotation laws that correspond to well known classes of Newtonian rotation curves, including the Keplerian rotation. We took care to deal with relatively small characteristic  velocities and potentials, $v/c\ll 1$ and  ${|U|}/c^2\ll1$, in all calculations reported here; that should guarantee that 1PN results are correct. The main results are following. 

An integrability condition leads to the emergence of two distinctly different types of general-relativistic corrections to the angular velocity.    

One of them is due to  the familiar geometric dragging of frames; this depends {\it only indirectly} on the disk structure and rotation,
because the metric in the 1PN expansion does include the backreaction and indirectly depends on characteristics of matter at the Newtonian level.  
The other  type  --- the dynamic effect --- that directly depends on the material structure of a disk, has been hitherto unknown.
That one is strongest in the central disk plane.  

The geometric and dynamic effects counteract; the dragging of frames pushes a disk forward, but the dynamic effect diminishes the angular velocity.
In  many  numerical solutions the net effect had been weakest in a central zone of the disk.
Inside of that zone the gas may be actually slowed down (the total 1PN angular velocity correction may or can be negative),
while at its boundary the dynamic  correction vanishes. The geometric dragging of frames always dominates in the disk boundary zone.  

The 1PN correction to the orbital period can be significant.
We found a number of numerical models that satisfy requirements of the 1PN approximation ---  with a  change of the orbital period close to 4\%. 
The Newtonian and 1PN equations posses a scaling symmetry,
that in principle would generate new solutions with even larger 1PN corrections to the angular velocity.
These rescaled solutions have to satisfy the assumptions of the 1PN approximations.
The ultimate  answer  concerning whether or not they approximate exact solutions
would require the investigation of an exact general-relativistic model.

We found that there is one type of matter --- dust --- for which the dynamic effect vanishes.
Disks made of dust experience 1PN corrections only through the dragging of frames.
It is interesting that uniformly rotating disks do not show any 1PN effect --- both contributory effects,
the geometric dragging of frames and the dynamic one, do vanish in this case.

In the merger of compact binaries consisting of pairs of black holes and neutron stars, a neutron star is  destroyed  \cite{Pan_Ton_Rez} and a heavy leftover  disk would form, that might reveal signs of the anti-dragging.   The  Bardeen-Petterson effect \cite{Bardeen-Petterson}, that arises  due to the geometric dragging,    is   known to occur in some AGN's \cite{Moran}. The dynamic braking  may lead    to its observable modifications    in black hole --- (heavy) disk systems.

\begin{acknowledgments}
This research was carried out with the supercomputer
``Deszno'' purchased thanks to the financial support of the European Regional
Development Fund in the framework of the Polish Innovation Economy
Operational Program (contract no.\ POIG.\ 02.01.00-12-023/08).
 The work of PJ was partially supported by the Polish NCN grant
\textit{Networking and R\&D for the Einstein Telescope}.  PM and MP acknowledge the support of the  Polish Ministry of Science and Higher Education grant IP2012~000172 (Iuventus Plus).
PM thanks Gerhard Sch\"afer for many discussions on the content of \cite{BDS} during his visit in Jena.
\end{acknowledgments}

\end{document}